\newtheorem{theorem}{Theorem}
\newtheorem{definition}{Definition}
\newtheorem{lemma}{Lemma}
\newtheorem{remark}{Remark}
\definecolor{matplotlibBlue}{rgb}{0.12156862745098039, 0.4666666666666667, 0.7058823529411765}%
\definecolor{matplotlibGray}{rgb}{0.4980392156862745, 0.4980392156862745, 0.4980392156862745}%
\definecolor{matplotlibGreen}{HTML}{2ca02c}%
\definecolor{matplotlibRed}{HTML}{d62728}%
\newcolumntype{a}{>{\columncolor{matplotlibGreen!60}}c}
\title{\LARGE \bf
	A Hierarchical Attack Identification Method for Nonlinear Systems
}
\author{Sarah Braun$^{1, 2}$ and Sebastian Albrecht$^{1}$ and Sergio Lucia$^{2, 3}$
	\thanks{$^{1}$Sarah Braun and Sebastian Albrecht are with Siemens Corporate Technology, Otto-Hahn-Ring 6, 81739 Munich, Germany. {Email: \tt\small sarah.braun@siemens.com}}%
	\thanks{$^{2}$Sarah Braun and Sergio Lucia are with the Laboratory of Internet of Things for Smart Buildings, Technische Universit\"at Berlin, Ernst-Reuter-Platz 7, 10587 Berlin, Germany.}%
	\thanks{$^{3}$Sergio Lucia is with Einstein Center Digital Future, Wilhelmstra{\ss}e 67, 10117 Berlin, Germany.}%
	\thanks{This work was supported by the German Federal Ministry of Education and Research (BMBF) via the funded research project AlgoRes (01S18066B).}
}
\begin{document}
	\onecolumn
	\textcopyright 2020 IEEE. Personal use of this material is permitted. Permission from IEEE must be obtained for all other uses, in any current or future media, including reprinting/republishing this material for advertising or promotional purposes, creating new collective works, for resale or redistribution to servers or lists, or reuse of any copyrighted component of this work in other works.
	
	\twocolumn
	\maketitle
	\thispagestyle{empty}
	\pagestyle{empty}

\begin{abstract}
Many autonomous control systems are frequently exposed to attacks, so methods for attack identification are crucial for a safe operation.
To preserve the privacy of the subsystems and achieve scalability in large-scale systems, identification algorithms should not require global model knowledge. 
We analyze a previously presented method for hierarchical attack identification, that is embedded in a distributed control setup for systems of systems with coupled nonlinear dynamics. It is based on the exchange of local sensitivity information and ideas from sparse signal recovery. In this paper, we prove sufficient conditions under which the method is guaranteed to identify all components affected by some unknown attack. 
Even though a general class of nonlinear dynamic systems is considered, our rigorous theoretical guarantees are applicable to practically relevant examples, which is underlined by numerical experiments with the IEEE~30 bus power system.
\end{abstract}

\section{INTRODUCTION}
The control of dynamic systems in safety-critical infrastructures such as power systems, factory automation or traffic networks has been automated more and more over the last decades. While the increasing degree of automation involves opportunities to improve the system's efficiency and integrity, it further increases the threat of malicious attacks
on physical or cyber components of the system. 
It is therefore crucial to develop methods for preventing, identifying, and handling attacks. The communication layers of cyber-physical systems are protected by means of IT 
security, and also the system's resilience on the control layer can be increased, e.g., by robust control. Nevertheless, absolute safety cannot be guaranteed. Therefore, each autonomous system should be equipped with methods for attack detection and identification to reveal the existence and location of an attack.

We consider a networked control system with states $x \in \mathbb{X} \subseteq \mathbb{R}^{d_x}$, initial state $x^0 \in \mathbb{X}$ and control $u\in \mathbb{U} \subseteq \mathbb{R}^{d_u}$, that consists of a set $\mathcal{P}$ of physically coupled subsystems with nonlinear dynamics. The dynamics of the system are exposed to possible attacks, where an \emph{attack} is modeled as a modification $a(u) \neq u$ of the input $u \in \mathbb{U}$ through an \emph{attack function} $a: \mathbb{U} \rightarrow \mathbb{U}$. Modeling an attack as a disturbance in the input is a frequently used attack model, \mbox{see~\cite{Pasqualetti2013Attack,Giraldo2018Survey,Ananduta2020Resilient}}, and implies that the intended controller action does not match the actual actuation of the system~\cite{Giraldo2018Survey}. While controller or actuator attacks are thus clearly covered by the attack model, also sensor attacks can be expressed by suitable attack functions since a sensor can be modeled as a simple input-output device. An attack can alter the local inputs $u_I \in \mathbb{U}_I$ in one or several subsystems $I\in \mathcal{P}$, and modify one or multiple input components $(u_I)_i$. It may or may not depend on the undisturbed control $u$ and we do \emph{not} assume the set of possibly \mbox{occurring attacks nor any attack patterns to be known. }

Denoting the local states of subsystem $I$ by $x_I \in \mathbb{X}_I$, the nonlinear discrete-time dynamics of subsystem $I$ including possible, unknown attacks $a_I$ are given as
\begin{equation}
\label{eq: local dynamics}
\begin{aligned}
x_I^+ &= f_I(x_I, a_I(u_I), z_{\mathcal{N}_I}), \\ 
z_I &= h_I(x_I).
\end{aligned}
\end{equation}
The function $h_I$ relates the local states $x_I$ to the local coupling variables $z_I \in \mathbb{R}^{d_{z_I}}$ through which subsystem $I$ influences other subsystems. By $\mathcal{N}_I$ we denote the neighborhood of subsystem $I$, that is defined as the set of all subsystems~$J$ influencing the dynamics of $x_I$ through couplings $z_J$. 
\subsection{Related Work}
A series of recently published surveys shows comprehensive research on control and model-based approaches towards attack detection and identification in cyber-physical systems~\cite{Giraldo2018Survey,Ding2018survey,Dibaji2019Systems}. 
Many proposed methods involve observer-based filters that are tailored for linear dynamics, e.g.,~\cite{Pasqualetti2013Attack,Gallo2018Distributed,Shames2011Distributed,Ding2008Model}. Both centralized~\cite{Ding2018survey} and distributed~\cite{Gallo2018Distributed,Pasqualetti2012Consensus} filters requiring only local model knowledge exist. 
Similar to our approach, some methods involve optimization problems to compute plausible sparse attack signals~\cite{Liu2014Detecting} or update the probability of hypotheses on the attack constellation~\cite{Ananduta2020Resilient}.
Some papers deal with networked systems with special properties such as consensus networks or weakly coupled subsystems~\cite{Pasqualetti2012Consensus}, other frameworks depend on the attackers' resources~\cite{Teixeira2015secure}. While some of these methods for linear systems have been applied to attack identification in power systems~\cite{Pasqualetti2013Attack,Shames2011Distributed}, using linearized swing equations to model the dynamics in power systems is only valid as long as the phase angles are close to each other~\cite{Pan2015Online}. Since this cannot be guaranteed in case of attack, identification methods designed for systems with nonlinear dynamics should be considered. To this end, de Persis and Isidori propose a differential-geometric characterization of attack identification in nonlinear systems~\cite{DePersis2001geometric}. They present solvability conditions in terms of an unobservability distribution and derive a detection filter. 
However, the proposed conditions result in a centralized approach that is unsuitable for large-scale systems.
In contrast, Esfahani et al.\ propose a scalable residual generator for nonlinear systems with additive attacks, which is based on solving a sequence of quadratic programs~\cite{Esfahani2012tractable}. The nonlinearities in the dynamics are not taken as part of the model but as disturbances following some known patterns, and a linear filter which is robust towards these disturbances is applied. An approach to attack identification in power systems with modeled nonlinearities is presented in~\cite{Pan2015Online}. Similar to our method, a sparse signal recovery problem is solved to find an attack signal explaining the observed behavior. While the authors consider several subsequent time steps under constant attack and apply linear regression requiring measurements of all phase angles, our approach uses measurements at some coupling nodes and one sampling time only. It can be classified as a hierarchical identification scheme since it requires aggregated sensitivity information but no global knowledge of the dynamics of each subsystem.

Further methods for linear and nonlinear systems can be found in the area of \emph{fault} detection and identification, which focuses on unintended system failures rather than malicious attacks~\cite{Boem2018Plug}. In this field, it is common to assume that the set of possible faults in nonlinear systems is known and finite, which is an invalid assumption for \emph{attack} identification~\cite{Ding2018survey}.

\subsection{Contribution}
We present a scalable attack identification method for distributed control systems in Section~\ref{sec: identification method}, which was introduced and successfully used to identify faulty buses in power systems in our preliminary work~\cite{Braun2020Hierarchical}. 
In contrast to, e.g., \cite{Pasqualetti2013Attack,Gallo2018Distributed,Shames2011Distributed,Esfahani2012tractable}, it is designed for explicitly modeled nonlinearities in the dynamics.
It involves the exchange of predicted nominal values for certain coupling states and local sensitivity information as in Fig.~\ref{fig: exchange nominal values}, based on which we approximate how an attack spreads through the network. Attack identification is then approached by solving a sparse signal recovery problem. While requiring the global knowledge of sensitivity information evaluated at the current iterate, the method does not involve the global dynamics nor cost functions nor measurements of all states, unlike~\cite{Pan2015Online,DePersis2001geometric,Teixeira2015secure}. It is designed for nonlinear dynamics but, in contrast to~\cite{Boem2018Plug}, does not assume all potential attacks to be known nor makes further restrictions like considering only additive attacks as in~\cite{Esfahani2012tractable}. 
\begin{figure}[b]
	\centering
	\begin{overpic}[tics=5,width=\columnwidth]{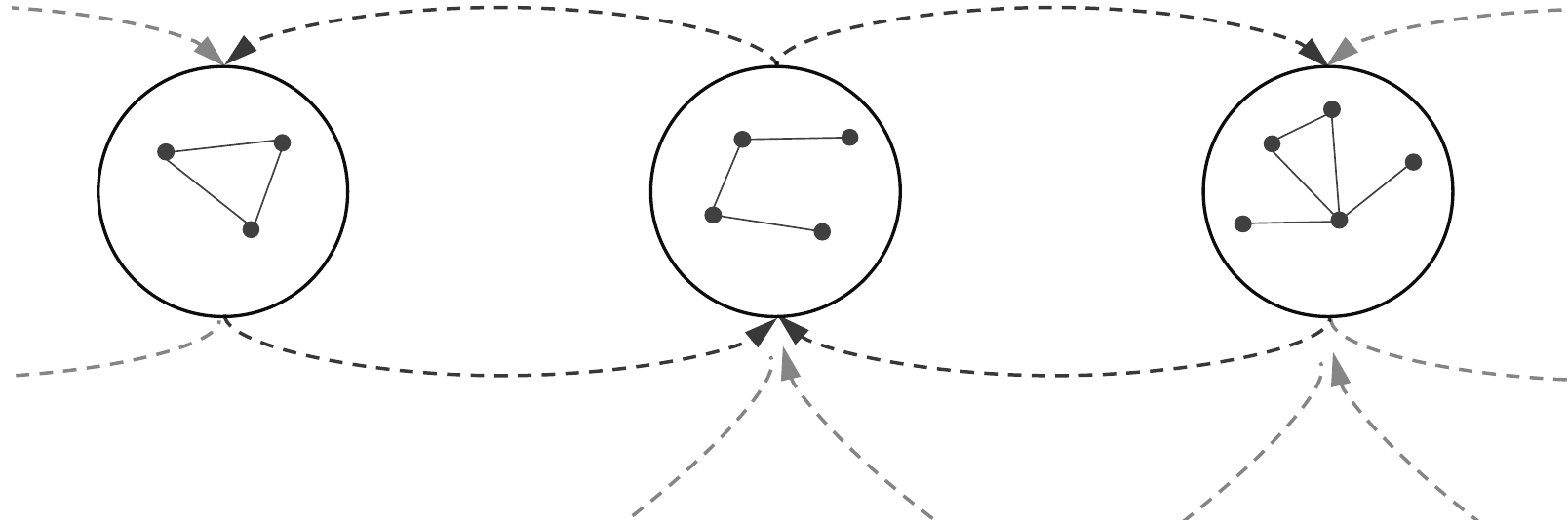}
		\put(12.7,14.3){II}
		\put(48.8,14.3){I}		
		\put(82.4,14.3){III}		
	\definecolor{matplotlibBlue}{rgb}{0.12156862745098039, 0.4666666666666667, 0.7058823529411765}%
	\definecolor{matplotlibGray}{rgb}{0.4980392156862745, 0.4980392156862745, 0.4980392156862745}%
	\def\DeltaT{0.45}%
	\def\AxOverlap{0.1}%
\newcommand{\contractOne}{%
\begin{tikzpicture}[scale=0.5]%
	\coordinate (O) at (0, -0.55);
	\coordinate (Z0) at (0, 0);
	\coordinate (Z1U) at (\DeltaT, 0.4);
	\coordinate (Z1N) at (\DeltaT, 0.13);
	\coordinate (Z1L) at (\DeltaT, -0.13);
	\coordinate (Z2U) at (2*\DeltaT, 0.45);
	\coordinate (Z2N) at (2*\DeltaT, 0.1);
	\coordinate (Z2L) at (2*\DeltaT, -0.3);
	\coordinate (Z3U) at (3*\DeltaT, 0.42);
	\coordinate (Z3N) at (3*\DeltaT, -0.05);
	\coordinate (Z3L) at (3*\DeltaT, -0.4);
	\coordinate (Z4U) at (4*\DeltaT, 0.3);
	\coordinate (Z4N) at (4*\DeltaT, -0.02);
	\coordinate (Z4L) at (4*\DeltaT, -0.4);
	\coordinate (Z5U) at (5*\DeltaT, 0.25);
	\coordinate (Z5N) at (5*\DeltaT, 0.02);
	\coordinate (Z5L) at (5*\DeltaT, -0.4);
	\coordinate (Z6U) at (6*\DeltaT, 0.3);
	\coordinate (Z6N) at (6*\DeltaT, 0.02);
	\coordinate (Z6L) at (6*\DeltaT, -0.35);
	
	\def\SignOne{0.13}
	\def\SignTwo{0.2}
	\def\SignThree{0.12}	
	\def\SignFour{0.15}	
	
	\draw [draw=matplotlibBlue!60, fill=matplotlibBlue!30, fill opacity=1.0] plot  coordinates {(Z0) (Z1L) (Z2L) (Z3L) (Z4L) (Z4U) (Z3U) (Z2U) (Z1U) (Z0)};
	
	\draw [draw=matplotlibBlue, thick] plot coordinates {(Z0) (Z1N) (Z2N) (Z3N) (Z4N)};
	
	\draw [{Bar[width=0.8mm]}-{Bar[width=0.8mm]}, draw=matplotlibBlue] ($(Z1N) - (0,\SignOne)$) -- ($(Z1N) + (0,\SignOne)$);
	\draw [{Bar[width=0.8mm]}-{Bar[width=0.8mm]}, draw=matplotlibBlue] ($(Z2N) - (0,\SignTwo)$) -- ($(Z2N) + (0,\SignTwo)$);
	\draw [{Bar[width=0.8mm]}-{Bar[width=0.8mm]}, draw=matplotlibBlue] ($(Z3N) - (0,\SignThree)$) -- ($(Z3N) + (0,\SignThree)$);
	\draw [{Bar[width=0.8mm]}-{Bar[width=0.8mm]}, draw=matplotlibBlue] ($(Z4N) - (0,\SignFour)$) -- ($(Z4N) + (0,\SignFour)$);		
		
	\draw [-{Latex[length=1.5mm]}, thick, black!60] ($(O)-(0,\AxOverlap)$) -- (0, 0.6);
	\draw [-{Latex[length=1.5mm]}, thick, black!60]  ($(O)-(\AxOverlap, 0)$) -- ($(O) + (5.3*\DeltaT, 0)$);
\end{tikzpicture}
}%
\newcommand{\contractTwo}{%
	\begin{tikzpicture}[scale=0.5]%
	\coordinate (O) at (0, -0.45);
	\coordinate (Z0) at (0, 0);
	\coordinate (Z1U) at (\DeltaT, 0.25);
	\coordinate (Z1N) at (\DeltaT, 0.05);
	\coordinate (Z1L) at (\DeltaT, -0.15);
	\coordinate (Z2U) at (2*\DeltaT, 0.2);
	\coordinate (Z2N) at (2*\DeltaT, -0.05);
	\coordinate (Z2L) at (2*\DeltaT, -0.3);
	\coordinate (Z3U) at (3*\DeltaT, 0.3);
	\coordinate (Z3N) at (3*\DeltaT, -0.02);
	\coordinate (Z3L) at (3*\DeltaT, -0.33);
	\coordinate (Z4U) at (4*\DeltaT, 0.4);
	\coordinate (Z4N) at (4*\DeltaT, 0.03);
	\coordinate (Z4L) at (4*\DeltaT, -0.3);
	\coordinate (Z5U) at (5*\DeltaT, 0.4);
	\coordinate (Z5N) at (5*\DeltaT, 0.06);
	\coordinate (Z5L) at (5*\DeltaT, -0.28);
	\coordinate (Z6U) at (6*\DeltaT, 0.38);
	\coordinate (Z6N) at (6*\DeltaT, 0.1);
	\coordinate (Z6L) at (6*\DeltaT, -0.31);
	
	\def\SignOne{0.12}
	\def\SignTwo{0.2}
	\def\SignThree{0.24}	
	\def\SignFour{0.15}	
	
	\draw [draw=matplotlibBlue!60, fill=matplotlibBlue!30, fill opacity=1.0] plot  coordinates {(Z0) (Z1L) (Z2L) (Z3L) (Z4L) (Z4U) (Z3U) (Z2U) (Z1U) (Z0)};
	
	\draw [draw=matplotlibBlue, thick] plot coordinates {(Z0) (Z1N) (Z2N) (Z3N) (Z4N)};
	
	\draw [{Bar[width=0.8mm]}-{Bar[width=0.8mm]}, draw=matplotlibBlue] ($(Z1N) - (0,\SignOne)$) -- ($(Z1N) + (0,\SignOne)$);
	\draw [{Bar[width=0.8mm]}-{Bar[width=0.8mm]}, draw=matplotlibBlue] ($(Z2N) - (0,\SignTwo)$) -- ($(Z2N) + (0,\SignTwo)$);
	\draw [{Bar[width=0.8mm]}-{Bar[width=0.8mm]}, draw=matplotlibBlue] ($(Z3N) - (0,\SignThree)$) -- ($(Z3N) + (0,\SignThree)$);
	\draw [{Bar[width=0.8mm]}-{Bar[width=0.8mm]}, draw=matplotlibBlue] ($(Z4N) - (0,\SignFour)$) -- ($(Z4N) + (0,\SignFour)$);
	
	\draw [-{Latex[length=1.5mm]}, thick, black!60] ($(O)-(0,\AxOverlap)$) -- (0, 0.6);
	\draw [-{Latex[length=1.5mm]}, thick, black!60] ($(O)-(\AxOverlap, 0)$) -- ($(O) + (5.3*\DeltaT, 0)$);	
	\end{tikzpicture}
}%
\newcommand{\contractThree}{%
\begin{tikzpicture}[scale=0.5]%
\coordinate (O) at (0, -0.65);
\coordinate (Z0) at (0, 0);
\coordinate (Z1U) at (\DeltaT, 0.25);
\coordinate (Z1N) at (\DeltaT, -0.02);
\coordinate (Z1L) at (\DeltaT, -0.27);
\coordinate (Z2U) at (2*\DeltaT, 0.35);
\coordinate (Z2N) at (2*\DeltaT, 0.05);
\coordinate (Z2L) at (2*\DeltaT, -0.25);
\coordinate (Z3U) at (3*\DeltaT, 0.40);
\coordinate (Z3N) at (3*\DeltaT, 0.05);
\coordinate (Z3L) at (3*\DeltaT, -0.3);
\coordinate (Z4U) at (4*\DeltaT, 0.43);
\coordinate (Z4N) at (4*\DeltaT, -0.06);
\coordinate (Z4L) at (4*\DeltaT, -0.4);
\coordinate (Z5U) at (5*\DeltaT, 0.45);
\coordinate (Z5N) at (5*\DeltaT, -0.08);
\coordinate (Z5L) at (5*\DeltaT, -0.45);
\coordinate (Z6U) at (6*\DeltaT, 0.41);
\coordinate (Z6N) at (6*\DeltaT, -0.06);
\coordinate (Z6L) at (6*\DeltaT, -0.5);

\def\SignOne{0.17}
\def\SignTwo{0.12}
\def\SignThree{0.24}	
\def\SignFour{0.27}	

\draw [draw=matplotlibBlue!60, fill=matplotlibBlue!30, fill opacity=1.0] plot  coordinates {(Z0) (Z1L) (Z2L) (Z3L) (Z4L)  (Z4U) (Z3U) (Z2U) (Z1U) (Z0)};

\draw [draw=matplotlibBlue, thick] plot coordinates {(Z0) (Z1N) (Z2N) (Z3N) (Z4N)};

\draw [{Bar[width=0.8mm]}-{Bar[width=0.8mm]}, draw=matplotlibBlue] ($(Z1N) - (0,\SignOne)$) -- ($(Z1N) + (0,\SignOne)$);
\draw [{Bar[width=0.8mm]}-{Bar[width=0.8mm]}, draw=matplotlibBlue] ($(Z2N) - (0,\SignTwo)$) -- ($(Z2N) + (0,\SignTwo)$);
\draw [{Bar[width=0.8mm]}-{Bar[width=0.8mm]}, draw=matplotlibBlue] ($(Z3N) - (0,\SignThree)$) -- ($(Z3N) + (0,\SignThree)$);
\draw [{Bar[width=0.8mm]}-{Bar[width=0.8mm]}, draw=matplotlibBlue] ($(Z4N) - (0,\SignFour)$) -- ($(Z4N) + (0,\SignFour)$);

\draw [-{Latex[length=1.5mm]}, thick, black!60] ($(O)-(0,\AxOverlap)$) -- (0, 0.6);
\draw [-{Latex[length=1.5mm]}, thick, black!60] ($(O)-(\AxOverlap, 0)$) -- ($(O) + (5.3*\DeltaT, 0)$);
\end{tikzpicture}
}
		\put(41,34.5){\small $\bar{z}_{\text{I}}$}	
		\put(23,12.5){\small $\bar{z}_{\text{II}}$}	
		\put(58,12.5){\small $\bar{z}_{\text{III}}$}
		\put(26.5,10.2){\contractOne}	
		\put(43.8,32.4){\contractTwo}	
		\put(62.2,10.2){\contractThree}	
	\end{overpic}
	\caption{Subsystems in a distributed control scheme with physical couplings shown by dashed edges. They exchange information about nominal future trajectories of their local couplings, optionally also sensitivity information (here depicted as blue areas and intervals around the nominal values).}
	\label{fig: exchange nominal values}
\end{figure}
The main contribution of the paper is presented in Section~\ref{sec: sufficient conditions}, proving sufficient conditions under which the identification method successfully uncovers all attacked components even for nonlinear dynamics and nonlinear couplings of the subsystems. 
Remarkably, the proposed rigorous guarantees of the identification method can be applied for realistic nonlinear case studies, as we illustrate with experiments on the IEEE~30 bus power system in Section~\ref{sec: numerical experiments}.


\subsection{Distributed Control Setup}
A distributed system structure as in~\eqref{eq: local dynamics} suggests the application of distributed control methods, which typically scale much better than centralized approaches. For an overview of existing methods, in particular approaches in distributed model predictive control (MPC), we refer to the survey papers~\cite{Christofides2013Distributed,Scattolini2009Architectures}. In contrast to fully decentralized approaches, distributed control schemes are based on the exchange of some information between the subsystems. This typically allows to reduce the uncertainty in the mutual interference and can be employed to design local controllers that are robust towards unknown couplings of neighbored subsystems.
This idea is implemented in~\cite{Farina2012Distributed}, where the subsystems exchange corridors in which future coupling values are guaranteed to lie, and apply robust MPC controllers to approach the uncertainties. The concept is shown in Fig.~\ref{fig: exchange nominal values} and formally described in~\cite{Lucia2015Contract}, supplemented by conditions for stability guarantees.
In our previous work~\cite{Braun2020Hierarchical}, we applied the distributed robust MPC method from~\cite{Lucia2015Contract} to a nonlinear system of systems under attack, designing the local control inputs~$u_I$ to be robust towards uncertain coupling values~$z_{\mathcal{N}_I}$ of neighbors as well as potentially disturbed internal inputs~$a_I(u_I)$.
In this way, constraint satisfaction is achieved in each subsystem $I$, even if an attack disturbs $u_I$ or causes the neighbors' couplings $z_{\mathcal{N}_I}$ to deviate from the nominal values.
\begin{figure}[t]
	\centering
	\begin{overpic}[width=\columnwidth, tics=5]{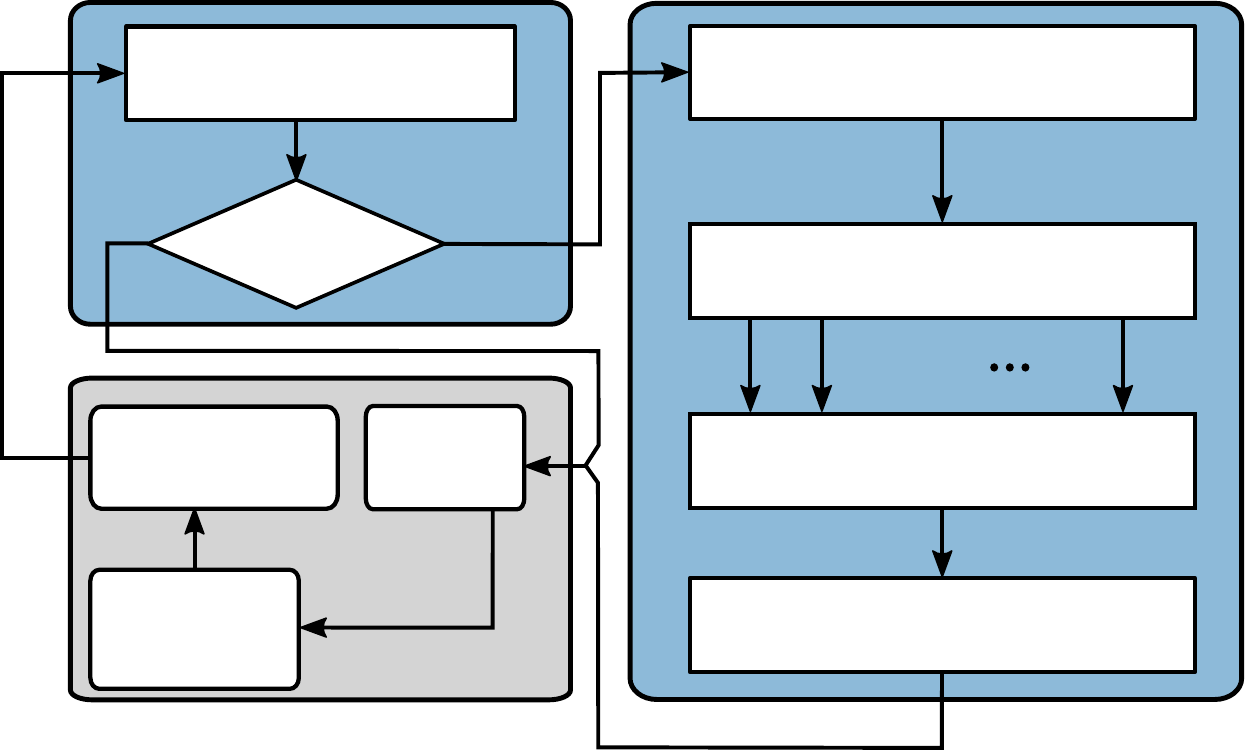} 
		\put(12,53){Attack Detection}
		\put(58,53){Attack Identification}	
		\footnotesize
		\put(16,39.9){\fontsize{7.5pt}{9pt}$\|\Delta z_I\| > \tau_\text{D}$}		
		\put(35,41.4){Yes}
		\put(8.5,41.4){No}
		\put(7.8,22.5){Measurements}
		\put(0.5,24){$\Delta z$}
		\put(31.1,22.5){DMPC}
		\put(37,17){$u$}
		\put(34.5,10.5){\includegraphics[width=0.4cm]{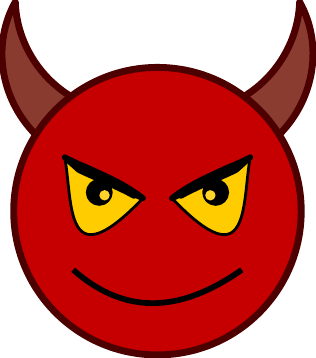}}
		\put(26.5,11){$a(u)$}
		\put(11,9){System}
		\put(56,37.5){Locally compute sensitivities}
		\put(66.5,30.5){\scriptsize$S_I^a, S_I^\mathcal{N}$}
		\put(64, 24){Set up and solve}
		\put(58, 20.5){identification problem~\eqref{opt: identification problem}}
		\put(76, 16.5){\scriptsize$\Delta a^\ast$}
		\put(61, 10.9){Identified attack set}
		\put(67, 7.4){$\text{supp}(a^\ast)$}
	\end{overpic}
	\caption{Outline of the hierarchical attack detection and identification, embedded in a distributed model predictive control (DMPC) loop and executed at each sampling time. Identification is based on exchanging locally computed sensitivity information and solving a central identification problem.}
	\label{fig: overview method}
\end{figure}
The identification method analyzed in this paper and illustrated in Fig.~\ref{fig: overview method} is applicable together with distributed closed-loop control schemes like the one in~\cite{Braun2020Hierarchical}. While it is mostly decoupled from the specific design of, e.g., the exchanged corridors, it requires the exchange of predicted nominal coupling values $\bar{z}_I$ as in Fig.~\ref{fig: exchange nominal values}. They indicate undisturbed reference values that the coupling variables $z_I$ will attain if no attack $a_I$ occurs in subsystem $I$ and the neighboring coupling variables $z_{\mathcal{N}_I}$ also behave according to their nominal values~$\bar{z}_{\mathcal{N}_I}$.
Closely following the notation in~\cite{Lucia2015Contract}, we denote by $\bar{z}_I(k|t)$ the nominal coupling value of subsystem $I$ for time $k$ calculated at time $t$. Similarly, $u_I(k|t)$ is the undisturbed input at time $k$ computed by the MPC scheme at time $t$ and $\bar{z}_{\mathcal{N}_I}(\cdot|t)$ is the function of nominal coupling values of neighboring subsystems on the prediction horizon, assumed to be discretized piecewise constant. The predicted nominal states $\bar{x}_I(k|t)$ and the nominal coupling values $\bar{z}_I(k|t)$ to be exchanged at time $t$ are computed as 
\begin{equation}
\label{eq: nominal values}
\begin{aligned}
\bar{x}_I(k|t) &\coloneqq f_I\left(x_I(k-1|t), u_I(k-1|t), \bar{z}_{\mathcal{N}_I}(\cdot|t)\right), \\
\bar{z}_I(k|t) &\coloneqq h_I\left(\bar{x}_I(k|t)\right),
\end{aligned}
\end{equation}
for $k = t+1, \dots, t+N$ with prediction horizon $N$.
After receiving the nominal trajectory $\bar{z}_J(\cdot|t)$ from each neighbor $J \in \mathcal{N}_I$ at time $t$, each subsystem $I$ combines its neighbors' nominal values for the next sampling time $t+1$ as
\begin{align*}
	\bar{z}_{\mathcal{N}_I}(k|t+1) \coloneqq \Pi_{J\in \mathcal{N}_I} \bar{z}_J(k|t).
\end{align*}
In order to obtain initial nominal coupling values $\bar{z}_I(k|0)$, we assume the system to be in steady state such that $h_I(x_I^0)$ for all $I \in \mathcal{P}$ provide suitable initial values. For a general procedure to obtain initial values we refer to~\cite{Braun2020Hierarchical}.

\section{HIERARCHICAL IDENTIFICATION METHOD}
\label{sec: identification method}
In accordance with relevant literature, such as~\cite{Pasqualetti2013Attack,Ananduta2020Resilient}, we distinguish between attack detection and identification as the problems to uncover the presence and location of an attack, respectively. Attack detectors typically monitor some system outputs and compare estimates with measurements to detect unexpected deviations that might indicate an attack~\cite{Dibaji2019Systems,Boem2018Plug}. For attack identification, we consider methods revealing the points of attack by means of the \emph{attack set}, which is defined in the following, similar to~\cite{Pasqualetti2013Attack}.
\begin{definition}[Attack Set]\label{def: attack set}
	Let $u \in \mathbb{U}$ be an undisturbed controller input and $a(u)$ the attacked input tampering with the dynamics according to~\eqref{eq: local dynamics}.  The \emph{attack set} $\text{supp}(a)$ of $a$ is defined as the set of all control indices which are affected by the attack, i.e., $\text{supp}(a) \coloneqq \{i: \left(a(u)\right)_i \neq u_i\} \subseteq \{1, \dots, d_u\}.$
\end{definition}

The blue highlighted fields in Fig.~\ref{fig: overview method} give an overview of the method for attack detection and identification that is presented in the following. It is embedded in a classical control loop with a distributed MPC controller and performed at each sampling time. Only if the detection scheme triggers an alarm, the identification method is executed. In the following, we consider one fixed sampling time and omit the time indices for the sake of brevity.
One step towards a hierarchical scheme (a detailed discussion follows) consists in monitoring the measurements of only the coupling variables~$z_I$ in each subsystem, instead of all global states $x$.
By definition, the nominal coupling values~$\bar{z}_I$ provide suitable estimates of the expected values in an undisturbed scenario. If in any subsystem $I$ the estimation error $\|\widetilde{z}_I - \bar{z}_I\|$ with measured coupling values $\widetilde{z}_I$ exceeds some detection threshold $\tau_\text{D}$, our detection method raises an alarm. Throughout this paper, we assume all coupling variables $z_I$ to be measurable without any measurement noise, i.e., $\widetilde{z}_I = z_I$. We further define the deviation $\widetilde{z}_I - \bar{z}_I$ from the nominal value as $\Delta z_I$. 

Since all subsystems are physically coupled, a significant deviation $\|\Delta z_I\| > \tau_\text{D}$ from the nominal values $\bar{z}_I$ in some subsystem $I$ may be caused by some internal attack $a_I$ in $I$, but may just as well result from an attack $a_J$ in some other subsystem $J\neq I$, the impact of which spreads through the network. 
The proposed attack identification is based on monitoring the deviations $\Delta z_I$ in the coupling values and figuring out at each time step $t$ in which subsystems the local inputs $u_I(t)$ are disturbed by some attack $a_I(u_I(t)) \neq u_I(t)$. For this purpose, we derive linear equations approximating the propagation of an attack through the network of subsystems.

According to the system dynamics~\eqref{eq: local dynamics}, the coupling variables $z_I= h_I \circ f_I(x_I, a_I(u_I), z_{\mathcal{N}_I})$ depend on $x_I, a_I(u_I)$ and $z_{\mathcal{N}_I}$, and we set $\zeta_I \coloneqq h_I \circ f_I$. The nominal coupling values are defined in~\eqref{eq: nominal values} such that \mbox{$\bar{z}_I = \zeta_I(x_I, u_I, \bar{z}_{\mathcal{N}_I})$.}
In order to analyze which deviations $\Delta z_I$ are caused by disturbances in $a_I(u_I)$ and $z_{\mathcal{N}_I}$, we compute a first-order Taylor approximation of $\zeta_I$ in $a_I(u_I)$ and $z_{\mathcal{N}_I}$ around the nominal value $(x_I, u_I, \bar{z}_{\mathcal{N}_I})$. Denoting the deviation $a_I(u_I) - u_I$ of the potentially disturbed input $a_I(u_I)$ from the undisturbed controller input $u_I$ by $\Delta a_I$, and the deviation $z_{\mathcal{N}_I} - \bar{z}_{\mathcal{N}_I}$ by $\Delta z_{\mathcal{N}_I}$, it holds by Taylor's theorem for $\Delta a_I, \Delta z_{\mathcal{N}_I} \rightarrow 0$:
\begin{equation}
\begin{aligned}
\label{eq: approximation attack propagation}
\Delta z_I 
&= \frac{\partial \zeta_I}{\partial a_I}(x_I, u_I, \bar{z}_{\mathcal{N}_I}) \Delta a_I \\
&+ \frac{\partial \zeta_I}{\partial z_{\mathcal{N}_I}}(x_I, u_I, \bar{z}_{\mathcal{N}_I})\Delta z_{\mathcal{N}_I} + R_I,
\end{aligned}
\end{equation}
where an estimation of the remainder term $R_I$ is given in Lemma~\ref{lem: estimation remainder term}.
The Jacobians $\frac{\partial \zeta_I}{\partial a_I}$ and $\frac{\partial \zeta_I}{\partial z_{\mathcal{N}_I}}$ evaluated at $(x_I, u_I, \bar{z}_{\mathcal{N}_I})$ are computed locally by each subsystem applying the chain rule on $\zeta_I = h_I \circ f_I$ and calculating
\begin{align*}
\frac{\partial \zeta_I}{\partial a_I}(x_I, u_I, \bar{z}_{\mathcal{N}_I}) = \frac{\partial h_I}{\partial a_I}(x_I) \frac{\partial f_I}{\partial a_I}(x_I, u_I, \bar{z}_{\mathcal{N}_I}).
\end{align*}
The Jacobian $\frac{\partial \zeta_I}{\partial z_{\mathcal{N}_I}}$ can be computed similarly. In the following, we denote these matrices by $S_I^a \coloneqq \frac{\partial \zeta_I}{\partial a_I}(x_I, u_I, \bar{z}_{\mathcal{N}_I})$ and \mbox{$S_I^\mathcal{N} \coloneqq \frac{\partial \zeta_I}{\partial z_{\mathcal{N}_I}}(x_I, u_I, \bar{z}_{\mathcal{N}_I})$.}
We assume that in the case of a detected attack all subsystems share locally evaluated sensitivity information by publishing $S_I^a$ and $S_I^\mathcal{N}$. Based on this data, equations~\eqref{eq: approximation attack propagation} for each subsystem $I$ omitting the remainder term $R_I$ provide a linear approximation of the attack propagation through the network. For attack identification, we compute an attack with the sparsest possible attack set that explains the observed deviations $\Delta z_I$ by satisfying the linearized propagation equations. To this end, the following sparse signal recovery problem is solved:
\begin{equation}
\label{opt: identification problem}
\begin{aligned}
&\min_{\Delta a} && \|\Delta a\|_0 \\
&\text{ s.t.}  &&S^a_I \Delta a_I = \Delta z_I -  S^{\mathcal{N}}_I \Delta z_{\mathcal{N}_I}  ~~\forall I \in \mathcal{P}.
\end{aligned}
\end{equation}
Here, $\|\Delta a\|_0$ denotes the $\ell_0$-``norm'' of $\Delta a$, counting the nonzero elements in $\Delta a$. For the corresponding attack $a$ with $\Delta a = a(u)- u$ it thus holds $|\text{supp}(a)| = \|\Delta a\|_0$ for the attack set $\text{supp}(a)$ as in Definition~\ref{def: attack set}.
Hence, an optimal solution $\Delta a^\ast$ of~\eqref{opt: identification problem} corresponds to an attack with the sparsest attack set among all attacks that fulfill the linear approximation of the attack propagation. Searching for a sparsest possible attack is a common approach for attack identification, see for example~\cite{Pasqualetti2013Attack,Pan2015Online,Liu2014Detecting}. It can be justified by the fact that attackers typically have restricted resources, so they can only disturb in a limited number of nodes~\cite{Liu2014Detecting}.
Since solving the $\ell_0$-minimization problem~\eqref{opt: identification problem} involves a mixed-integer program and is thus NP-hard, the $\ell_0$-``norm'' is commonly relaxed by the $\ell_1$-norm, which turns problem~\eqref{opt: identification problem} into a linear optimization problem~\cite{Braun2020Hierarchical,Pan2015Online,Candes2005Decoding}. In this paper, however, we focus on provable statements with the linearized attack propagation in the constraints and do not introduce another approximation error but \mbox{stick to the $\ell_0$-``norm''.}

Due to the fact that the identification problem~\eqref{opt: identification problem} involves measured coupling deviations $\Delta z_I$ and sensitivity information $S_I^a$, $S_I^\mathcal{N}$ for all subsystems $I$, it is not a distributed identification method. 
But it is also not a classical centralized method since no information about the local dynamics~$f_I$, coupling functions~$h_I$ nor individual cost functions is needed. Assuming that the subsystems agree to provide the required sensitivities and measurements to some superior instance that solves the identification problem, it can be considered hierarchical.
Additionally, it requires only the couplings but not all states to be measured. Since problem~\eqref{opt: identification problem} contains $d_u$ optimization variables, it can be expected to scale significantly better than a fully centralized nonlinear method involving $d_x + d_u$ variables affecting the global dynamics.

\section{SUFFICIENT CONDITIONS FOR GUARANTEED ATTACK IDENTIFICATION}
\label{sec: sufficient conditions}
We consider some fixed sampling time $t$ at which an unknown attack $\widehat{a}$ disturbs the controller input $u(t)$ by \mbox{$\Delta \widehat{a} = \widehat{a}(u(t)) - u(t)$} and causes deviations $\Delta \widehat{z}$ in the coupling variables.
Only for the special case of $\zeta_I$ being linear for all $I$, the actually occurring attack $\Delta \widehat{a}$ satisfies the first-order approximation of the attack propagation and is a feasible solution of the identification problem~\eqref{opt: identification problem}. Even for systems with linear dynamics $\dot{x} = Ax + Ba(u)$ and linear coupling equations $z_I = H_I x_I$, however, the resulting functions $\zeta_I$ can be nonlinear since the solution of a linear ODE is in general nonlinear. 
In this section, we consider nonlinear functions~$\zeta_I$ and derive suitable assumptions under which a solution of the identification problem~\eqref{opt: identification problem} identifies an attack $\Delta a^\ast$ that is close to the actual attack $\Delta \widehat{a}$ in an appropriate manner. Instead of bounding the error \mbox{$\|\Delta a^\ast - \Delta \widehat{a}\|$} with the $\ell_1$- or $\ell_2$-norm, we are interested in results stating that the actual, unknown attack set $\text{supp}(\widehat{a})$ (or some superset) is correctly identified. The two main results of this paper, given in Theorems~\ref{thm: superset identification} and~\ref{thm: correct identification}, provide statements of this kind.

In order to analyze the approximation error of the linearized attack propagation constituting the constraints of the identification problem~\eqref{opt: identification problem}, we consider the remainder term~$R_I$ in~\eqref{eq: approximation attack propagation} and derive an upper bound for $\|R_I\|_2$ in Lemma~\ref{lem: estimation remainder term}. For this purpose, we make use of the multi-index notation for derivatives of multivariate functions, see, e.g.,~\cite{Forster2011Analysis}. For a multi-index $\alpha = (\alpha_1, \alpha_2, \dots, \alpha_n) \in \mathbb{N}^n$, a real vector $x=(x_1, x_2, \dots, x_n) \in \mathbb{R}^n$ and some smooth function $g: \mathbb{R}^n \rightarrow \mathbb{R}^m$ we define
\begin{align*}
&|\alpha| \coloneqq \alpha_1 + \alpha_2 + \dots + \alpha_n,  
&&\alpha! \coloneqq \alpha_1!\alpha_2! \dots \alpha_n!, \\
&x^\alpha \coloneqq x_1^{\alpha_1}x_2^{\alpha_2} \dots x_n^{\alpha_n} ~\text{ and}
&&\partial^\alpha g \coloneqq \frac{\partial^{|\alpha|}g}{\partial x_1^{\alpha_1}\partial x_2^{\alpha_2}\dots \partial x_n^{\alpha_n}}.
\end{align*}

\begin{lemma}[Estimation of Remainder Term]
	\label{lem: estimation remainder term}
	Let for all $I$ the function $\zeta_I = h_I \circ f_I$ be twice continuously differentiable. We assume that at some fixed $x_I$ the maximum second-order partial derivative $K_I \coloneqq \max_{|\alpha|=2}\left\|\partial^\alpha \zeta_I(x_I, \cdot, \cdot)\right\|_2$ exists and is finite, and define $K \coloneqq \max_I K_I$.
	For the remainder term $R_I$ of the first-order Taylor approximation of $\zeta_I$ it holds 
	\begin{align*}
	\|R_I\|_2 \leq \frac{K_I}{2} \big(\|\Delta a_I\|_1 + \|\Delta z_{\mathcal{N}_I}\|_1\big)^2.
	\end{align*}
	For the total remainder term $R = \left(R_I\right)_{I \in \mathcal{P}}$ it holds 
	\begin{align*}
	\|R\|_2 \leq \frac{K}{2} \big(\|\Delta a\|_1 + M\|\Delta z\|_1\big)^2,
	\end{align*}
	with $M \coloneqq \max_I |\mathcal{N}_I|$ denoting the maximum degree in the network where each subsystem $I$ constitutes one node.
\end{lemma}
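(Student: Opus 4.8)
The plan is to reduce the multivariate, vector-valued Taylor expansion to a one-dimensional one along the segment joining the nominal point to the perturbed point. This keeps the vector output $\zeta_I \in \mathbb{R}^{d_{z_I}}$ intact and lets the Euclidean-norm bound in the definition of $K_I$ be applied to the whole output at once. Concretely, I fix $x_I$, write the stacked perturbation as $\Delta w_I \coloneqq (\Delta a_I, \Delta z_{\mathcal{N}_I})$, and introduce $\phi(s) \coloneqq \zeta_I(x_I, u_I + s\Delta a_I, \bar{z}_{\mathcal{N}_I} + s\Delta z_{\mathcal{N}_I})$ for $s \in [0,1]$. Then $\phi(0)$ and $\phi'(0)$ reproduce exactly the zeroth- and first-order terms of~\eqref{eq: approximation attack propagation}, so the remainder is $R_I = \phi(1) - \phi(0) - \phi'(0)$, which by the integral form of Taylor's theorem equals $\int_0^1 (1-s)\,\phi''(s)\,ds$.

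The key estimate is the bound on $\|\phi''(s)\|_2$. Expanding the second directional derivative as the double sum $\phi''(s) = \sum_{i,k} \partial_{w_i}\partial_{w_k}\zeta_I(\cdot)\,(\Delta w_I)_i (\Delta w_I)_k$ over all components of $w_I = (a_I, z_{\mathcal{N}_I})$, the triangle inequality gives $\|\phi''(s)\|_2 \le \sum_{i,k}\|\partial_{w_i}\partial_{w_k}\zeta_I\|_2\,|(\Delta w_I)_i|\,|(\Delta w_I)_k|$. Every mixed partial here corresponds to a multi-index $\alpha$ with $|\alpha|=2$, so each factor $\|\partial_{w_i}\partial_{w_k}\zeta_I\|_2$ is bounded by $K_I$, and the remaining double sum collapses to $\big(\sum_i |(\Delta w_I)_i|\big)^2 = \|\Delta w_I\|_1^2 = (\|\Delta a_I\|_1 + \|\Delta z_{\mathcal{N}_I}\|_1)^2$. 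Since $\int_0^1 (1-s)\,ds = \tfrac12$, taking norms inside the integral yields the first claimed inequality $\|R_I\|_2 \le \tfrac{K_I}{2}(\|\Delta a_I\|_1 + \|\Delta z_{\mathcal{N}_I}\|_1)^2$.

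For the total remainder I first pass from the Euclidean norm of the stacked vector to a sum via $\|R\|_2 = \sqrt{\sum_I \|R_I\|_2^2} \le \sum_I \|R_I\|_2$, which holds because all cross terms are non-negative. Inserting the per-subsystem bound with $K_I \le K$ and applying the analogous inequality $\sum_I c_I^2 \le (\sum_I c_I)^2$ for $c_I \coloneqq \|\Delta a_I\|_1 + \|\Delta z_{\mathcal{N}_I}\|_1$ reduces the goal to showing $\sum_I (\|\Delta a_I\|_1 + \|\Delta z_{\mathcal{N}_I}\|_1) \le \|\Delta a\|_1 + M\|\Delta z\|_1$. Here $\sum_I \|\Delta a_I\|_1 = \|\Delta a\|_1$ since the local inputs partition $\Delta a$, while the neighbor terms are handled by a counting argument: writing $\|\Delta z_{\mathcal{N}_I}\|_1 = \sum_{J \in \mathcal{N}_I} \|\Delta z_J\|_1$ and swapping the order of summation, each $\|\Delta z_J\|_1$ is counted once for every $I$ having $J$ as a neighbor, i.e.\ at most $M$ times by the maximum-degree assumption, so $\sum_I \|\Delta z_{\mathcal{N}_I}\|_1 \le M\|\Delta z\|_1$.

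The step I expect to require the most care is matching the vector-valued nature of $\zeta_I$ with the Euclidean-norm definition of $K_I$: a naive component-wise Lagrange remainder would place a distinct intermediate point in each output coordinate and obstruct a clean application of $\|\partial^\alpha \zeta_I\|_2 \le K_I$, which is precisely why I route everything through the scalar parameter $s$ and the integral form, bounding the entire vector $\phi''(s)$ in one stroke. A secondary point worth stating explicitly is the interpretation of $K_I$ as a uniform bound on $\|\partial^\alpha \zeta_I(x_I, \cdot, \cdot)\|_2$ over the segment of integration, together with the identification $|\{I : J \in \mathcal{N}_I\}| \le M$ with the maximum degree, which is what licenses the final counting estimate.
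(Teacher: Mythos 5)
Your proof is correct, and it reaches the key local estimate by a genuinely different route. The paper derives the bound on $\|R_I\|_2$ from the multivariate Taylor theorem in Lagrange form (multi-index notation with intermediate points $\xi^a_I$, $\xi^{z_\mathcal{N}}_I$, citing Forster) and then collapses $\sum_{|\alpha|=2}\frac{1}{\alpha!}\,|\Delta w_I|^\alpha$ via the multinomial theorem; you instead restrict $\zeta_I$ to the segment through the scalar function $\phi(s)$ and use the integral form of the remainder, so the multinomial bookkeeping is replaced by the elementary collapse of the ordered double sum $\sum_{i,k}|(\Delta w_I)_i|\,|(\Delta w_I)_k| = \|\Delta w_I\|_1^2$. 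Beyond being cleaner, your route quietly repairs a subtle point in the paper's argument: the cited Taylor theorem is stated for scalar-valued functions, and applying it to the vector-valued $\zeta_I$ componentwise yields a \emph{different} intermediate point in each output coordinate, whereas the paper's proof writes a single intermediate point inside $\|\partial^\alpha \zeta_I(\cdot)\|_2$; a strictly componentwise application would cost an extra factor of order $\sqrt{d_{z_I}}$. Your integral-form remainder bounds the whole vector $\phi''(s)$ at once and recovers exactly the constant $K_I/2$ --- precisely the concern you flagged. The aggregation step (triangle inequality over subsystems, $\sum_I c_I^2 \le \big(\sum_I c_I\big)^2$, and the counting of how often each $\Delta z_J$ occurs among the stacked $\Delta z_{\mathcal{N}_I}$, bounded by $M$) matches the paper's; if anything, your count $|\{I : J \in \mathcal{N}_I\}| \le M$ is stated more carefully than the paper's ``each $\Delta z_I$ appears $|\mathcal{N}_I|$ many times,'' which implicitly assumes the coupling relation is symmetric --- an assumption both arguments ultimately share.
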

\begin{proof}
	According to Theorem 2 in §7 of~\cite{Forster2011Analysis}, it holds for the remainder term $R_I$
	\begin{align*}
	R_I = \sum_{|\alpha|=2} \partial^\alpha \zeta_I(x_I, \xi^a_{I}, \xi^{z_{\mathcal{N}}}_I)\frac{1}{\alpha!} \begin{pmatrix}\Delta a_I\\\Delta z_{\mathcal{N}_I}\end{pmatrix}^\alpha,
	\end{align*}
	with \mbox{$\xi^a_{I} = u_I + c_I^a \Delta a_I$}, \mbox{$\xi^{z_{\mathcal{N}}}_I = \bar{z}_{\mathcal{N}_I} + c_I^\mathcal{N} \Delta z_{\mathcal{N}_I}$} intermediate points for some $c_I^a, c_I^\mathcal{N} \in (0,1)$.
	Using the triangle inequality and the definition of $K_I$, we obtain
	\begin{align*}
	\|R_I\|_2 
	&\leq \sum_{|\alpha|=2} \bigg\|\partial^\alpha \zeta_I(x_I, \xi^a_{I}, \xi^{z_{\mathcal{N}}}_I)\frac{1}{\alpha!} \underbrace{\begin{pmatrix}\Delta a_I\\\Delta z_{\mathcal{N}_I}\end{pmatrix}^\alpha}_{\in \mathbb{R}} \bigg\|_2 \\
	&=\sum_{|\alpha|=2} \big\|\partial^\alpha \zeta_I(x_I, \xi^a_{I}, \xi^{z_{\mathcal{N}}}_I) \big\|_2 \frac{1}{\alpha!} \left|\begin{pmatrix}\Delta a_I\\\Delta z_{\mathcal{N}_I}\end{pmatrix}^\alpha\right| \\
	&\leq K_I \sum_{|\alpha|=2}\frac{1}{\alpha!} \left|\begin{pmatrix}\Delta a_I\\\Delta z_{\mathcal{N}_I}\end{pmatrix}\right|^\alpha \\
	&= \frac{K_I}{2} \big(\|\Delta a_I\|_1 + \|\Delta z_{\mathcal{N}_I}\|_1\big)^2.
	\end{align*}
	The last equality holds due to the multinomial theorem for $k=2$, which states the equality
	\mbox{$(x_1 + x_2 + \dots + x_n)^k$} $= \sum_{|\alpha|=k} \frac{k!}{\alpha!} x^\alpha$
	and can be proven using the binomial theorem and induction on $n$.
	It remains to derive an upper bound for the total remainder term $R = \left(R_I\right)_{I \in \mathcal{P}}$.
	We estimate
	\begin{align*}
	\|R\|_2 
	&\leq \sum_{I\in \mathcal{P}}\|R_I\|_2
	\leq \sum_{I\in \mathcal{P}}\frac{K_I}{2} \big(\|\Delta a_I\|_1 + \|\Delta z_{\mathcal{N}_I}\|_1\big)^2\\
	&\leq \frac{K}{2} \sum_{I\in \mathcal{P}} \big(\|\Delta a_I\|_1 + \|\Delta z_{\mathcal{N}_I}\|_1\big)^2\\
	&\leq \frac{K}{2} \left(\left\|\begin{pmatrix}\Delta a_1 \\ \vdots \\ \Delta a_{|\mathcal{P}|}\end{pmatrix}\right\|_1 + \left\|\begin{pmatrix}\Delta z_{\mathcal{N}_1} \\ \vdots \\ \Delta z_{\mathcal{N}_{|\mathcal{P}|}}\end{pmatrix}\right\|_1\right)^2,
	\end{align*}
	where the last inequality also follows from the multinomial theorem.
	For the first vector in the last line it holds \mbox{$\begin{pmatrix}\Delta a_1, \dots, \Delta a_{|\mathcal{P}|}\end{pmatrix}^\text{T} = \Delta a^\text{T}$,} but for the second vector it holds in general $\left\|\begin{pmatrix}\Delta z_{\mathcal{N}_1}, \dots, \Delta z_{\mathcal{N}_{|\mathcal{P}|}}\end{pmatrix}\right\|_1 \neq \|\Delta z\|_1$ since each vector $\Delta z_{I}$ appears $|\mathcal{N}_I|$ many times. With $M$ denoting the maximum degree in the network, it holds 
	\begin{align*}
	\left\|\begin{pmatrix}\Delta z_{\mathcal{N}_1} \\ \vdots \\ \Delta z_{\mathcal{N}_{|\mathcal{P}|}}\end{pmatrix}\right\|_1 
	&= \sum_I |\mathcal{N}_I| \|\Delta z_I\|_1 
	\leq M \|\Delta z\|_1.
	\end{align*}
	In total, we obtain
	\begin{align*}
	\|R\|_2 \leq \frac{K}{2} \big(\|\Delta a\|_1 + M\|\Delta z\|_1\big)^2.
	\end{align*}
\end{proof}
\noindent
Using this upper bound on the remainder term, we next derive an $\varepsilon$-$\delta$-criterion that specifies a condition under which the computed solution $\Delta a^\ast$ of~\eqref{opt: identification problem} is in an \mbox{$\varepsilon$-neighborhood} around the actual attack $\Delta \widehat{a}$. 
For the sake of clarity, we express the linear constraints of problem~\eqref{opt: identification problem} in the form $S \Delta a = b$ with $S = \text{diag}\left((S_I^a)_{I\in \mathcal{P}}\right) \in \mathbb{R}^{d_z \times d_u}$ and \mbox{$b=\left(\Delta z_I - S_I^\mathcal{N} \Delta z_{\mathcal{N}_I}\right)_{I\in \mathcal{P}}$.} The smallest singular value of~$S$ is denoted as $\sigma_{\min}$.
\begin{lemma}[$\varepsilon$-$\delta$-Criterion]
	\label{lem: eps-delta-criterion}
	We assume that $\sigma_{\min} > 0$ and $d_z \geq d_u$ holds for $d_z = \sum_{I\in \mathcal{P}} d_{z_I}$ denoting the total number of coupling variables. Let $\varepsilon > 0$ be given and denote by $\Delta a^\ast$ a feasible solution of the identification problem~\eqref{opt: identification problem}. Defining~$\delta$ as $\delta \coloneqq \sqrt{\frac{2\varepsilon\sigma_{\min}}{K}}$ it holds:
	
	If $\left(\|\Delta \widehat{a}\|_1 + M\|\Delta \widehat{z}\|_1\right) \leq \delta$, then 
	$\left\|\Delta \widehat{a} - \Delta a^\ast\right\|_2 \leq \varepsilon$.
\end{lemma}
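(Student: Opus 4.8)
The plan is to exploit the fact that the true attack $\Delta\widehat{a}$ satisfies the \emph{exact} propagation relation~\eqref{eq: approximation attack propagation} \emph{including} the remainder, whereas any feasible $\Delta a^\ast$ satisfies the same linear system with the remainder dropped. Writing both in the compact form $S\Delta a = b$ introduced above, I would first observe that evaluating~\eqref{eq: approximation attack propagation} at the actual attack gives $S\Delta\widehat{a} = b - R$, since the measured deviations entering $b$ are precisely the actual $\Delta\widehat{z}$. Feasibility of $\Delta a^\ast$ means $S\Delta a^\ast = b$ exactly. Subtracting the two identities yields the clean relation
\begin{align*}
S\left(\Delta a^\ast - \Delta\widehat{a}\right) = R,
\end{align*}
so that the entire discrepancy between the computed and the true attack is channeled through the remainder term $R$.

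Next I would invert $S$ in the norm sense. Because $d_z \geq d_u$ and $\sigma_{\min} > 0$, the matrix $S \in \mathbb{R}^{d_z \times d_u}$ has full column rank, and the standard singular-value estimate $\|Sv\|_2 \geq \sigma_{\min}\|v\|_2$ holds for every $v \in \mathbb{R}^{d_u}$. Applying this with $v = \Delta a^\ast - \Delta\widehat{a}$ turns the identity above into
\begin{align*}
\sigma_{\min}\left\|\Delta a^\ast - \Delta\widehat{a}\right\|_2 \leq \|R\|_2,
\end{align*}
which bounds the quantity of interest purely in terms of $\|R\|_2$.

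To close the argument I would substitute the remainder estimate from Lemma~\ref{lem: estimation remainder term}, namely $\|R\|_2 \leq \tfrac{K}{2}\left(\|\Delta\widehat{a}\|_1 + M\|\Delta\widehat{z}\|_1\right)^2$, and then invoke the hypothesis $\|\Delta\widehat{a}\|_1 + M\|\Delta\widehat{z}\|_1 \leq \delta$ to obtain $\|R\|_2 \leq \tfrac{K}{2}\delta^2$. Combining with the previous display and inserting the definition $\delta^2 = \tfrac{2\varepsilon\sigma_{\min}}{K}$ gives
\begin{align*}
\left\|\Delta a^\ast - \Delta\widehat{a}\right\|_2 \leq \frac{K\delta^2}{2\sigma_{\min}} = \varepsilon,
\end{align*}
as claimed. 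I expect the only genuinely delicate point to be the bookkeeping in the first paragraph: one must be confident that the \emph{same} right-hand side $b$ appears for both the true attack and the feasible solution, i.e.\ that the constraint data $\Delta\widehat{z}$ used to set up~\eqref{opt: identification problem} are exactly the deviations produced by $\widehat{a}$. Once that identification is secured, the remaining steps are a direct chain of the singular-value bound, Lemma~\ref{lem: estimation remainder term}, and the choice of $\delta$.
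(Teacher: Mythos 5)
Your proposal is correct and follows essentially the same route as the paper's own proof: both subtract the exact relation $b - S\Delta\widehat{a} = R$ from the feasibility identity $b - S\Delta a^\ast = 0$, apply the smallest-singular-value bound $\|Sv\|_2 \geq \sigma_{\min}\|v\|_2$ (valid since $d_z \geq d_u$ and $\sigma_{\min}>0$), and close with the remainder estimate of Lemma~\ref{lem: estimation remainder term} and the definition of $\delta$. The bookkeeping point you flag is handled in the paper exactly as you suggest: $b$ is built from the measured deviations $\Delta\widehat{z}$ produced by the actual attack, so the same right-hand side appears in both identities.
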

\begin{proof}
	The main idea of the proof is to make use of the linearity of the constraints in~\eqref{opt: identification problem} to bound the distance between $\Delta \widehat{a}$ and $\Delta a^\ast$. A feasible solution $\Delta a^\ast$ clearly satisfies the constraints such that 
	$b - S \Delta a^\ast = 0$ holds.
	For the actual attack $\Delta \widehat{a}$ it holds $b - S \Delta \widehat{a} = R$ with $R$ being the remainder term from the Taylor expansion. Subtracting these equations, we obtain
	$$
	\left\|R\right\|_2 = 
	\left\|S (\Delta \widehat{a} - \Delta a^\ast)\right\|_2.
	$$
	Since $d_z \geq d_u$, a lower bound of this expression is given by
	$$
	\left\|R\right\|_2 \geq \sigma_{\min} \left\|\Delta \widehat{a} - \Delta a^\ast\right\|_2,
	$$
	with $\sigma_{\min} > 0$ denoting the smallest singular value of $S$.
	Using the upper bound of the remainder term from Lemma~\ref{lem: estimation remainder term} and the definition of $\delta$, it follows
	\begin{align*}
	\left\| \Delta \widehat{a} - \Delta a^\ast \right\|_2 &\leq \frac{\left\|R\right\|_2}{\sigma_{\min}} \leq \frac{K}{2\sigma_{\min}}\left(\|\Delta \widehat{a}\|_1 + M\|\Delta \widehat{z}\|_1\right)^2 \\
	&\leq \frac{K}{2\sigma_{\min}} \delta^2 = \varepsilon.
	\end{align*}
\end{proof}
We would like to derive conditions under which the attack sets $\text{supp}(\widehat{a})$ and $\text{supp}(a^\ast)$ are similar rather than the attack vectors $\Delta \widehat{a}$ and $\Delta a^\ast$ themselves. In other words, we are interested in a specific $\varepsilon$ such that Lemma~\ref{lem: eps-delta-criterion} implies that both~$\widehat{a}$ and $a^\ast$ have the same attack set. First, we state a slightly weaker result, implying that under the indicated conditions all attacked inputs are identified by the computed solution, but possibly \mbox{also some benign components are suspected.}

\begin{theorem}[Correct Superset-Identification]
	\label{thm: superset identification}
	Let again $\sigma_{\min}> 0$, $d_z \geq d_u$ hold, let $M$ denote the maximum degree and $\Delta a^\ast$ a feasible solution of the identification problem~\eqref{opt: identification problem}. 
	Let $\varepsilon > 0$ be such that $\varepsilon < \min_{i \in \text{supp}(\widehat{a})} |(\Delta \widehat{a})_i|$ and choose~$\delta$ accordingly as in Lemma~\ref{lem: eps-delta-criterion}. If $\left(\|\Delta \widehat{a}\|_1 + M\|\Delta \widehat{z}\|_1\right) \leq \delta$ holds, then for the attack sets we have
	\begin{align*}
	\text{supp}(a^\ast) \supseteq \text{supp}(\widehat{a}).
	\end{align*}
\end{theorem}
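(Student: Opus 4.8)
The plan is to reduce the support-containment claim to the vector-closeness bound already furnished by Lemma~\ref{lem: eps-delta-criterion}. The hypotheses of the theorem are exactly those of that lemma (we have $\sigma_{\min}>0$, $d_z\geq d_u$, $\delta$ chosen as $\sqrt{2\varepsilon\sigma_{\min}/K}$, and $\|\Delta\widehat{a}\|_1 + M\|\Delta\widehat{z}\|_1\leq\delta$), so the lemma immediately yields the Euclidean estimate $\|\Delta\widehat{a}-\Delta a^\ast\|_2\leq\varepsilon$. The remaining task is purely combinatorial: to translate ``the two attack vectors are $\varepsilon$-close in $\ell_2$'' into ``every index active in $\widehat{a}$ is also active in $a^\ast$.''

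First I would fix an arbitrary index $i\in\text{supp}(\widehat{a})$ and aim to show $(\Delta a^\ast)_i\neq 0$. The crucial elementary observation is that the $\ell_2$-norm dominates every single coordinate, so from the lemma's bound I get the componentwise estimate $|(\Delta\widehat{a})_i-(\Delta a^\ast)_i|\leq\|\Delta\widehat{a}-\Delta a^\ast\|_2\leq\varepsilon$. Next I would apply the reverse triangle inequality to obtain $|(\Delta a^\ast)_i|\geq|(\Delta\widehat{a})_i|-|(\Delta\widehat{a})_i-(\Delta a^\ast)_i|\geq|(\Delta\widehat{a})_i|-\varepsilon$.

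Then the definition of $\text{supp}(\widehat{a})$ together with the standing assumption on $\varepsilon$ finishes the argument: since $i\in\text{supp}(\widehat{a})$ we have $|(\Delta\widehat{a})_i|\geq\min_{j\in\text{supp}(\widehat{a})}|(\Delta\widehat{a})_j|>\varepsilon$, hence $|(\Delta a^\ast)_i|>0$, i.e. $i\in\text{supp}(a^\ast)$. As $i$ was arbitrary, this proves $\text{supp}(a^\ast)\supseteq\text{supp}(\widehat{a})$. Recall from the discussion after Definition~\ref{def: attack set} that $\text{supp}(a)$ is exactly the set of nonzero coordinates of $\Delta a$, so working with the deviation vectors is equivalent to working with the attack sets.

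I do not expect a genuine obstacle here; the statement is a short consequence of Lemma~\ref{lem: eps-delta-criterion}. The only point demanding care is the \emph{strictness} of the inequality $\varepsilon<\min_{i\in\text{supp}(\widehat{a})}|(\Delta\widehat{a})_i|$: it is precisely this strict gap between the smallest nonzero attack magnitude and the perturbation radius $\varepsilon$ that guarantees each active coordinate of $\widehat{a}$ stays strictly away from zero in $\Delta a^\ast$, rather than merely small. Note also that the argument says nothing about indices outside $\text{supp}(\widehat{a})$ — a benign coordinate could be perturbed by up to $\varepsilon$ and thereby become nonzero in $\Delta a^\ast$ — which is exactly why only the superset inclusion, and not equality, can be claimed at this stage.
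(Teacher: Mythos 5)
Your proposal is correct and follows essentially the same route as the paper's proof: invoke Lemma~\ref{lem: eps-delta-criterion} to get $\|\Delta\widehat{a}-\Delta a^\ast\|_2\leq\varepsilon$, then use the fact that the $\ell_2$-norm dominates each coordinate together with the strict gap $\varepsilon<\min_{i\in\text{supp}(\widehat{a})}|(\Delta\widehat{a})_i|$. The only cosmetic difference is that the paper argues by contradiction (a missing index would force $\|\Delta\widehat{a}-\Delta a^\ast\|_2>\varepsilon$), whereas you argue directly via the reverse triangle inequality; these are logically the same step.
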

\begin{proof}
	From Lemma~\ref{lem: eps-delta-criterion} it follows that \mbox{$\|\Delta \widehat{a} - \Delta a^\ast\|_2 \leq \varepsilon.$}
	We assume for contradiction that $\text{supp}(a^\ast) \nsupseteq \text{supp}(\widehat{a})$. Hence, there is some index $i \in \text{supp}(\widehat{a})$ but $i \notin \text{supp}(a^\ast)$, i.e., $(\Delta \widehat{a})_i \neq 0$ and $(\Delta a^\ast)_i = 0$. This implies
	\begin{align*}
	\|\Delta \widehat{a} - \Delta a^\ast\|_2 
	&\geq |(\Delta \widehat{a})_i - (\Delta a^\ast)_i| = |(\Delta \widehat{a})_i| \\
	&\geq \min_{i \in \text{supp}(\widehat{a})} |(\Delta \widehat{a})_i| > \varepsilon,
	\end{align*}
	which contradicts the result following from Lemma~\ref{lem: eps-delta-criterion}.\\
\end{proof}
Theorem~\ref{thm: superset identification} guarantees, under certain assumptions, that a solution of the identification problem identifies all attacked inputs, but possibly also some undisturbed inputs. In the numerical experiments in Section~\ref{sec: numerical experiments} we will analyze how large the discrepancy is on average for randomly generated attacks. To achieve equality of the attack sets $\text{supp}(\widehat{a})$ and $\text{supp}(a^\ast)$ and thus guarantee that $\Delta a^\ast$ correctly identifies all attackers but no more, some modifications are necessary. Due to the nonlinearity of $\zeta_I$ the approximation of the attack propagation is not exact and the actual attack~$\Delta \widehat{a}$ in general does not have to be a feasible solution of~\eqref{opt: identification problem}. To resolve this, we consider a relaxed version of the identification problem:
\begin{equation}
\label{opt: identification problem relaxed}
\begin{aligned}
&\min_{\Delta a} &&\|\Delta a\|_0 \\
&\text{ s.t.}  &&\left\|b - S\Delta a\right\|_2 \leq \frac{\varepsilon}{2}\sigma_{\min},
\end{aligned}
\end{equation}
where again $\varepsilon < \min_{i \in \text{supp}(\widehat{a})} |(\Delta \widehat{a})_i|$ and $\sigma_{\min}$ is the smallest singular value of the sensitivity matrix $S$. Slightly modifying the definition of $\delta$ by a constant factor and requiring $\Delta a^\ast$ to be a global solution, we obtain the following stronger result:
\begin{theorem}[Exact Identification]
	\label{thm: correct identification}
	Assume $\sigma_{\min}>0$, $d_z \geq d_u$ and let $\Delta a^\ast$ be a globally optimal solution of the relaxed problem~\eqref{opt: identification problem relaxed}. For \mbox{$\varepsilon < \min_{i \in \text{supp}(\widehat{a})} |(\Delta \widehat{a})_i|$,} we define $\tilde{\delta} \coloneqq \sqrt{\frac{\varepsilon\sigma_{\min}}{K}}$. If the actual attack $\Delta \widehat{a}$ satisfies $\left(\|\Delta \widehat{a}\|_1 + M\|\Delta \widehat{z}\|_1\right) \leq \tilde{\delta}$, then it holds 
	$$
	\text{supp}(a^\ast) = \text{supp}(\widehat{a}).
	$$
\end{theorem}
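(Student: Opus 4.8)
The plan is to establish the two inclusions $\text{supp}(a^\ast)\supseteq\text{supp}(\widehat{a})$ and $\text{supp}(a^\ast)\subseteq\text{supp}(\widehat{a})$ separately: the first via a distance estimate in the spirit of Theorem~\ref{thm: superset identification}, and the second via a cardinality comparison that only becomes available because $\Delta a^\ast$ is now required to be \emph{globally} optimal. The preparatory observation I would make first, and which is really the point of introducing the relaxed problem~\eqref{opt: identification problem relaxed}, is that the constant $\tilde\delta$ and the tolerance $\tfrac{\varepsilon}{2}\sigma_{\min}$ are matched precisely so that the true attack $\Delta\widehat{a}$ is itself a feasible point of~\eqref{opt: identification problem relaxed}.

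First I would verify this feasibility. As in the proof of Lemma~\ref{lem: eps-delta-criterion}, the actual attack satisfies $b - S\Delta\widehat{a} = R$, so $\|b - S\Delta\widehat{a}\|_2 = \|R\|_2$. Combining the remainder bound of Lemma~\ref{lem: estimation remainder term} with the hypothesis $\|\Delta\widehat{a}\|_1 + M\|\Delta\widehat{z}\|_1 \leq \tilde\delta$ and the definition $\tilde\delta = \sqrt{\varepsilon\sigma_{\min}/K}$ gives
\begin{align*}
\|b - S\Delta\widehat{a}\|_2 \leq \frac{K}{2}\,\tilde\delta^{\,2} = \frac{\varepsilon\sigma_{\min}}{2},
\end{align*}
so $\Delta\widehat{a}$ meets the constraint of~\eqref{opt: identification problem relaxed}. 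Since $\Delta a^\ast$ is a globally optimal solution and $\Delta\widehat{a}$ is feasible, this immediately yields $\|\Delta a^\ast\|_0 \leq \|\Delta\widehat{a}\|_0$, that is, $|\text{supp}(a^\ast)| \leq |\text{supp}(\widehat{a})|$, which will close the $\subseteq$ direction.

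Next I would recover the distance bound. Because both $\Delta a^\ast$ and $\Delta\widehat{a}$ are feasible, the triangle inequality gives $\|S(\Delta\widehat{a} - \Delta a^\ast)\|_2 \leq \|b - S\Delta\widehat{a}\|_2 + \|b - S\Delta a^\ast\|_2 \leq \varepsilon\sigma_{\min}$. Using $d_z \geq d_u$ and $\sigma_{\min}>0$, the matrix $S$ has full column rank and hence $\|Sv\|_2 \geq \sigma_{\min}\|v\|_2$; dividing by $\sigma_{\min}$ yields $\|\Delta\widehat{a} - \Delta a^\ast\|_2 \leq \varepsilon$. The superset inclusion $\text{supp}(a^\ast)\supseteq\text{supp}(\widehat{a})$ then follows verbatim from the contradiction argument of Theorem~\ref{thm: superset identification}: any index $i$ with $(\Delta\widehat{a})_i \neq 0$ but $(\Delta a^\ast)_i = 0$ would force $\|\Delta\widehat{a} - \Delta a^\ast\|_2 \geq |(\Delta\widehat{a})_i| \geq \min_{j\in\text{supp}(\widehat{a})}|(\Delta\widehat{a})_j| > \varepsilon$.

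Finally I would combine the pieces: the superset inclusion gives $|\text{supp}(a^\ast)| \geq |\text{supp}(\widehat{a})|$, global optimality gave the reverse inequality, so the cardinalities coincide, and equal cardinality together with $\text{supp}(a^\ast)\supseteq\text{supp}(\widehat{a})$ forces equality. The only delicate point—and the reason the relaxation is needed at all—is the bookkeeping with the factor of two: the tolerance $\tfrac{\varepsilon}{2}\sigma_{\min}$ is exactly half of the budget $\varepsilon\sigma_{\min}$ that the triangle inequality consumes, so that $\Delta\widehat{a}$ simultaneously stays feasible (which drives the $\subseteq$ direction through sparsity) and remains within $\varepsilon$ of any feasible optimum (which preserves the $\supseteq$ direction). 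Checking that $\tilde\delta$ and the tolerance line up in this way is the main thing to get right; the remainder of the argument is a direct reuse of Lemmas~\ref{lem: estimation remainder term} and~\ref{lem: eps-delta-criterion} and of Theorem~\ref{thm: superset identification}.
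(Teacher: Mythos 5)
Your proposal is correct and follows essentially the same route as the paper's proof: feasibility of $\Delta \widehat{a}$ in the relaxed problem~\eqref{opt: identification problem relaxed} via the remainder bound of Lemma~\ref{lem: estimation remainder term}, the distance estimate $\|\Delta \widehat{a} - \Delta a^\ast\|_2 \leq \varepsilon$ obtained by combining the two residuals (your triangle inequality on $\|S(\Delta \widehat{a} - \Delta a^\ast)\|_2$ is exactly the paper's $\|R - R^\ast\|_2/\sigma_{\min}$ estimate), the superset inclusion reused from Theorem~\ref{thm: superset identification}, and the cardinality comparison from global optimality to force equality. The only difference is cosmetic ordering—you establish feasibility and the sparsity inequality first, the paper does the distance bound first—and your closing remark about how the factor $\tfrac{1}{2}$ in the tolerance matches $\tilde\delta$ correctly pinpoints the key bookkeeping.
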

\begin{proof}
	As a first step we show that the proof of Lemma~\ref{lem: eps-delta-criterion} works similarly for the relaxed identification problem~\eqref{opt: identification problem relaxed} and the adapted $\tilde{\delta}$. The expression $b - S \Delta a^\ast$ is no longer zero and we define the corresponding residual as $R^\ast \coloneqq b - S \Delta a^\ast$ with \mbox{$\|R^\ast\|_2 \leq \frac{\varepsilon}{2} \sigma_{\min}$} due to feasibility. Similar to the proof of Lemma~\ref{lem: eps-delta-criterion} we estimate
	\begin{align*}
	\left\|\Delta \widehat{a} - \Delta a^\ast\right\|_2 
	&\leq 
	\frac{\left\|R - R^\ast\right\|_2}{\sigma_{\min}} 
	\leq 
	\frac{\left\|R\| + \|R^\ast\right\|_2}{\sigma_{\min}}\\
	&\leq
	\frac{1}{\sigma_{\min}} \left(\frac{K\widetilde{\delta}^2 + \varepsilon \sigma_{\min}}{2}\right) =
	\varepsilon.
	\end{align*}
	We have thus shown a similar result as in Lemma~\ref{lem: eps-delta-criterion} and a proof analogously to the one of Theorem~\ref{thm: superset identification} follows accordingly. Therefore, we obtain
	\begin{align}
	\label{eq: support inclusion one way}
	\text{supp}(a^\ast) \supseteq \text{supp}(\widehat{a})
	\end{align}
	for $\Delta a^\ast$ being a solution of the relaxed identification problem~\eqref{opt: identification problem relaxed}. It remains to show that $\text{supp}(a^\ast) \subseteq \text{supp}(\widehat{a})$.
	To this end, we note that the actual attack $\Delta \widehat{a}$ is a feasible solution of the relaxed problem~\eqref{opt: identification problem relaxed} since 
	\begin{align*}
	\left\|b - S \Delta \widehat{a}\right\|_2 
	&\leq \frac{K}{2}\big(\|\Delta \widehat{a}\|_1 + M\|\Delta \widehat{z}\|_1\big)^2 \leq \frac{K}{2} \tilde{\delta}^2 
	= \frac{\varepsilon}{2}\sigma_{\min}.
	\end{align*}
	Since both $\Delta \widehat{a}$ and $\Delta a^\ast$ are feasible solutions of~\eqref{opt: identification problem relaxed} and $\Delta a^\ast$ is globally optimal, it holds $\|\Delta a^\ast\|_0 \leq \|\Delta \widehat{a}\|_0$. Together with~\eqref{eq: support inclusion one way} (implying $\|\Delta a^\ast\|_0 \geq \|\Delta \widehat{a}\|_0$) it follows \mbox{$\|\Delta a^\ast\|_0 = \|\Delta \widehat{a}\|_0$.} Since $\text{supp}(a^\ast) \supseteq \text{supp}(\widehat{a})$, this implies $\text{supp}(a^\ast) = \text{supp}(\widehat{a})$.\\
\end{proof}
\begin{remark}\label{rem: reduction sensitivities}
	The assumptions $\sigma_{\min} > 0$ and $d_z\geq d_u$ can be replaced without loss of generality by assuming that the subsystems do not transmit the Jacobians \mbox{$S_I^a \in \mathbb{R}^{d_{z_I} \times d_{u_I}}$,} but instead remove dependent columns and publish submatrices $\widetilde{S}_I^a \in \mathbb{R}^{d_{z_I} \times r_I}$ of full rank $r_I \leq \min\{d_{z_I}, d_{u_I}\}$. So they omit redundant information which only further reduces the number of variables in problems~\eqref{opt: identification problem} and~\eqref{opt: identification problem relaxed}. It yields a total sensitivity matrix~$\widetilde{S} = \text{diag}\left((\widetilde{S}_I^a)_{I\in \mathcal{P}}\right)$ of \mbox{size $d_z \times r$ with} $r= \sum_I r_I \leq d_z$, so the proof of Lemma~\ref{lem: eps-delta-criterion} follows as above.
\end{remark}

\section{ATTACK IDENTIFICATION \\IN POWER SYSTEMS}
\label{sec: numerical experiments}
In order to evaluate the identification method from Section~\ref{sec: identification method}, we consider the problem of identifying faulty buses in power systems. For randomly generated attack scenarios, we analyze the ratio of correctly identified (supersets of the) attack sets and the proportion of samples where the sufficient conditions of Theorems~\ref{thm: superset identification} and~\ref{thm: correct identification} are satisfied, respectively. This allows us to assess not only the effectiveness of the identification method for nonlinear systems, but also the relevance of our main statements in Theorems~\ref{thm: superset identification} and~\ref{thm: correct identification}.

\begin{figure}[htb]
	\centering
	\begin{overpic}[tics=5, width=0.9\columnwidth]{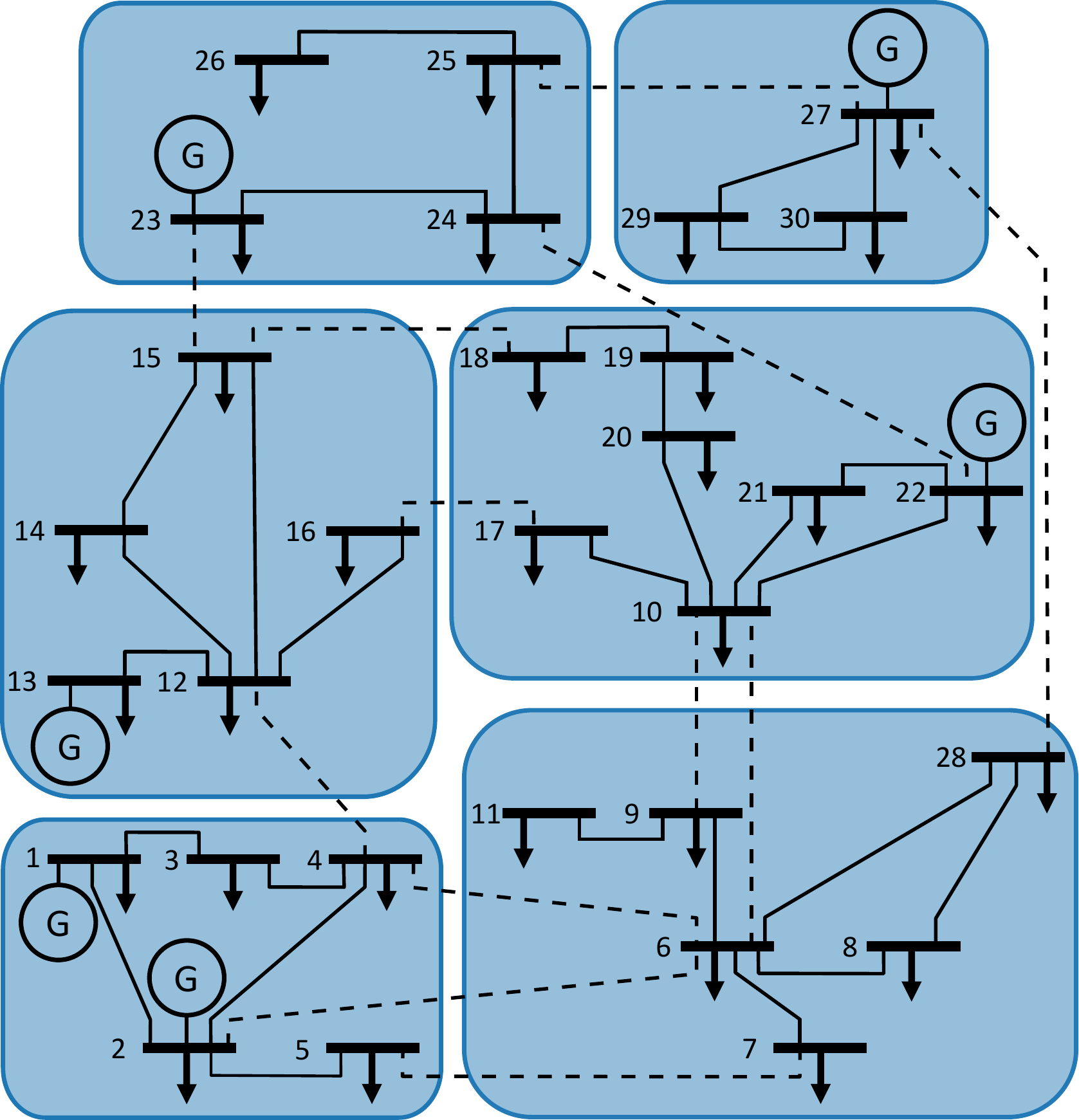}
		\put(4,87){\normalsize I}
		\put(92,87){\normalsize II}
		\put(-5.3,50){\normalsize III}		
		\put(96,54){\normalsize IV}			
		\put(-4,12){\normalsize V}	
		\put(99,17){\normalsize VI}
	\end{overpic}	
	\caption{Schematic of the IEEE~30 bus system partitioned into six subsystems I-VI. Physical couplings through transmission lines between two subsystems are depicted as dashed lines.}
	\label{fig: IEEE 30 bus system}
\end{figure}
We consider the IEEE~30 bus system shown in Fig.~\ref{fig: IEEE 30 bus system}, which consists of 30 buses all of which we assume to be connected to synchronous machines. The dynamics of the machine in bus $i$ with phase angle $\theta_i$ can thus be modeled by the so-called swing equation, see~\cite{Kundur1994Power}:
\begin{align*}
m_i \ddot{\theta}_i + d_i \dot{\theta}_i = u_i - \sum_{j \in N_i} P_{ij},
\end{align*}
where $m_i$ and $d_i$ denote inertia and damping constants, $u_i$ is the power infeed at bus $i$ and $P_{ij}$ describes the active power flow from bus $i$ to some bus $j$ in its neighborhood $N_i$. For the six generators buses 1, 2, 13, 22, 23 and 27, the dynamic coefficients $m_i$ and $d_i$ are taken based on the values in~\cite{DeTuglie2008coherency} and the conversion rules in~\cite{Kundur1994Power}. For the remaining load buses, arbitrary coefficients in a realistic range are chosen. 
If the dynamics of power lines are neglected, the power flow $P_{ij}$ between neighbored buses $i$ and $j$ can be modeled by 
\begin{align*}
P_{ij} = |V_i||V_j|b_{ij}\sin(\theta_i - \theta_j),
\end{align*}
with $|V_i|$ denoting the voltage magnitude at bus $i$, and $b_{ij}$ the susceptance of the transmission line between buses $i$ and $j$. Realistic parameter values and initial values for $\theta$ are taken from a simulation of the corresponding power system in Matpower~\cite{Zimmerman2010MATPOWER}. All parameters are chosen in a per-unit (p.u.) system with a 200\si{kV} base and a nominal frequency of 60\si{Hz}.
We consider constant loads at the six buses 3, 7, 14, 19, 26 and 30 and assume that the power infeeds at the remaining load and generator buses can be controlled through $u_i$ with $u_i \in [-0.4, 0]$ p.u.\ at all load buses and $u_i \in [-0.4, 0.9]$ p.u.\ at all generator buses. For frequency control of the system, we consider the following optimal control problem with states $\theta_i$, $\omega_i \coloneqq \dot{\theta_i}$ for $i=1,\dots,30$, and parameters $k_{ij} \coloneqq |V_i||V_j|b_{ij}$:
\begin{align}
&\min_{\theta, \omega, u} &&\|\omega\|_2^2 \notag\\
&\text{ s.t. }
&& \dot{\theta_i} = \omega_i, \label{opt: optimal frequency control}\\
&&& \dot{\omega_i} = \frac{1}{m_i}\bigg(u_i - d_i\omega_i - \sum_{j \in N_i} k_{ij} \sin\left(\theta_i - \theta_j\right)\bigg).\notag
\end{align}
An optimal solution of problem~\eqref{opt: optimal frequency control} minimizes the deviation~$\omega$ from the nominal frequency while obeying the power flow and machine dynamics. In our experiments, we consider a time horizon of $10$\si{s}, discretized with time steps of length $\Delta t = 0.1\si{s}$, and solve problem~\eqref{opt: optimal frequency control} in a distributed receding-horizon fashion applying the robust MPC scheme from~\cite{Lucia2015Contract}. It is implemented based on the do-mpc environment for multi-stage MPC~\cite{Lucia2017Rapid}, applying the NLP solver Ipopt~\cite{Wachter2006Implementation} and CasADi for automatic differentiation and optimization~\cite{Andersson2019CasADi}.
In the distributed scheme, one local MPC controller is used for each of the six subsystems indicated in Fig.~\ref{fig: IEEE 30 bus system}, which are interconnected through transmission lines drawn as dashed lines. To model the resulting physical coupling, in each subsystem those phase angles $\theta_i$ are defined as coupling variables which are incident to at least one dashed edge. In subsystem V, for example, the coupling variables \mbox{$z_\text{V}=(\theta_2, \theta_4, \theta_5)$} influence the neighbored \mbox{subsystems III and VI.} The coupling variables are assumed to be parametrized piecewise constant in the numerical integration scheme.
The partition of the IEEE~30 bus system into the indicated six subsystems yields a total number of $d_z=18$ coupling variables, which is significantly less than $d_x=60$ states and underlines again the reduced complexity of the proposed procedure, which does not require global measurements of all states nor knowledge of the local dynamics. 
As there are $d_u=30\nleq d_z$ input variables, we assume that the subsystems publish full-rank submatrices $\widetilde{S}_I^a$ instead of the original sensitivity \mbox{matrices $S_I^a$ as described in Remark~\ref{rem: reduction sensitivities}.}

To evaluate the identification method from Section~\ref{sec: identification method} and the strength of the sufficient conditions of Theorems~\ref{thm: superset identification} and~\ref{thm: correct identification}, we carry out two test series \texttt{attack\_1} and \texttt{attack\_3}. In both, the system is exposed to a new, randomly generated attack at each of the 100 time steps in $[0,10]\si{s}$ and the proposed detection and identification method depicted in Fig.~\ref{fig: overview method} is applied at each sampling time. In \texttt{attack\_1}, at each time step $t$, one attacked node~$i$ and a disturbed input value $a_i(u_i(t)) \neq u_i(t)$ are chosen uniformly at random. For the remaining nodes $j\neq i$, the undisturbed controller input $a_j(u_j(t)) = u_j(t)$ is applied to the system. In \texttt{attack\_3} three random nodes per time step are attacked. An attack is detected at time step $t$ if $\|\Delta z_I(t)\|_\infty > \tau_\text{D}$ for some $I$ with detection threshold $\tau_\text{D} \coloneqq 10^{-5}$. If this is the case, the sensitivity matrices $\widetilde{S}_I^a$, $S_I^\mathcal{N}$ are locally evaluated by applying automatic differentiation with CasADi to the local integrator schemes, representing the functions~$f_I$ and $h_I$ in equations~\eqref{eq: local dynamics}. Normalizing the columns of the matrices $\widetilde{S}_I^a$ and aggregating all sensitivity information, the identification problems~\eqref{opt: identification problem} and~\eqref{opt: identification problem relaxed} are set up and solutions $\Delta a^\ast_\eqref{opt: identification problem}$ and $\Delta a^\ast_\eqref{opt: identification problem relaxed}$ are computed with Bonmin, respectively~\cite{Bonami2008algorithmic}. The identified attack sets $\text{supp}(a^\ast_\eqref{opt: identification problem})$, $\text{supp}(a^\ast_\eqref{opt: identification problem relaxed})$ contain those indices $i$, for which $|\Delta a^\ast_\eqref{opt: identification problem}|_i > \varepsilon_\text{I}$ resp.\ $|\Delta a^\ast_\eqref{opt: identification problem relaxed}|_i > \varepsilon_\text{I}$ holds with identification threshold $\varepsilon_\text{I} \coloneqq 10^{-5}$. 

Among the 100 time steps with random attack sets of cardinality 1 in \texttt{attack\_1}, the detection gives an alarm at 79 sampling times. This seemingly low rate is due to the fact that only one input $u_i$ is modified by some random disturbance $\Delta \widehat{a}_i$, which in 21 cases is too small for causing a significant deviation in any coupling node. In the test series \texttt{attack\_3}, an attack is detected in all 100 time steps. In these 79 respectively 100 time steps, the attack identification method is applied. For both experiments, Table~\ref{tab: identification results} lists how often the actual, unknown attack set $\text{supp}(\widehat{a})$ or a superset is correctly identified, and how often the sufficient condition of Theorem~\ref{thm: superset identification} resp.\ Theorem~\ref{thm: correct identification} is satisfied. The results of \texttt{attack\_1} are shown in tables (a) and (b), those of \texttt{attack\_3} in tables (c) and (d). The left tables refer to identifying a superset of $\text{supp}(\widehat{a})$ as in Theorem~\ref{thm: superset identification}, the right tables to identifying the attack set exactly as in Theorem~\ref{thm: correct identification}.
\begin{table}[h]
	\caption{Fourfold tables showing the results of experiments \texttt{attack\_1} (tables (a) and (b)) and \texttt{attack\_3} ((c) and (d)) with one respectively three random attackers per time step. 
	}
	\label{tab: identification results}
	\small
	\begin{subtable}[t]{0.4\columnwidth}
		\caption{Superset identification according to Theorem~\ref{thm: superset identification}}
		\begin{tabular}{c|a|c}
			\scriptsize{\texttt{attack\_1}}& Ident. & $\overline{\text{Ident.}}$ \\ 
			\hline \rowcolor{matplotlibGray!60}\rule[.25ex]{0pt}{2.5ex}
			Cond. & 94.94\% & 0.00\% \\ 
			\hline \rule[.25ex]{0pt}{2.5ex}
			$\overline{\text{Cond.}}$ & 5.06\% & 0.00\% \\ 
			\hline \rule[.25ex]{0pt}{2.5ex}
			& 100\% &
		\end{tabular} 
		\vskip1ex
	\end{subtable} 
	\hfil
	\begin{subtable}[t]{0.4\columnwidth}
		\caption{Exact identification \mbox{according} to Theorem~\ref{thm: correct identification}}
		\begin{tabular}{c|a|c}
			\scriptsize{\texttt{attack\_1}}		& Ident. & $\overline{\text{Ident.}}$ \\ 
			\hline \rowcolor{matplotlibBlue!60}\rule[.25ex]{0pt}{2.5ex}		
			Cond. & 93.67\% & 0.00\% \\ 
			\hline \rule[.25ex]{0pt}{2.5ex}
			$\overline{\text{Cond.}}$ & 6.33\% & 0.00\% \\ 
			\hline \rule[.25ex]{0pt}{2.5ex}
			& 100\% &\\
		\end{tabular} 
		\vskip1ex
	\end{subtable}
	\vskip.4cm
	\begin{subtable}[t]{0.4\columnwidth}
		\caption{Superset identification according to Theorem~\ref{thm: superset identification}}
		\begin{tabular}{c|a|c}
			\scriptsize{\texttt{attack\_3}}		& Ident. & $\overline{\text{Ident.}}$ \\ 
			\hline \rowcolor{matplotlibGray!60}\rule[.25ex]{0pt}{2.5ex}
			Cond. & 40.00\% & 0.00\% \\ 
			\hline \rule[.25ex]{0pt}{2.5ex}
			$\overline{\text{Cond.}}$ & 59.00\% & 1.00\% \\ 
			\hline \rule[.25ex]{0pt}{2.5ex}
			& 99.00\% &	
		\end{tabular} 
	\end{subtable}
	\hfil
	\begin{subtable}[t]{0.4\columnwidth}	
		\caption{Exact identification \mbox{according} to Theorem~\ref{thm: correct identification}}
		\begin{tabular}{c|a|c}
			\scriptsize{\texttt{attack\_3}}		& Ident. & $\overline{\text{Ident.}}$ \\ 
			\hline \rowcolor{matplotlibBlue!60}\rule[.25ex]{0pt}{2.5ex}
			Cond. & 31.00\% & 0.00\% \\ 
			\hline \rule[.25ex]{0pt}{2.5ex}
			$\overline{\text{Cond.}}$ & 51.00\% & 18.00\%\\ 
			\hline \rule[.25ex]{0pt}{2.5ex}
			& 82.00\% &	
		\end{tabular} 
	\end{subtable}
	\vskip.4cm
	\caption*{\fbox{\parbox{0.95\columnwidth}{{\small Cond. = Sufficient condition satisfied, $\overline{\text{Cond.}}$ = not satisfied \\ Ident. = (Superset of) $\text{supp}(\widehat{a})$ identified, $\overline{\text{Ident.}}$ = not identified}}}}
\end{table}

Considering the experiments \texttt{attack\_1}, the green highlighted column of Table~\ref{tab: identification results}\,(a) reveals that at each time the identification method is applied, it correctly identifies a superset of the unknown attack set. In 94.94\% of the cases, this is guaranteed since the sufficient condition of Theorem~\ref{thm: superset identification} is satisfied and implies the correct identification of a superset. In 5.06\%, however, the condition is not fulfilled but still some superset is computed. This is possible because the theorem only states a sufficient but not necessary condition. 
Since $\widetilde{\delta} < \delta$ with $\delta, \widetilde{\delta}$ denoting the parameters occurring in Theorems~\ref{thm: superset identification} and~\ref{thm: correct identification}, the sufficient condition of Theorem~\ref{thm: correct identification} is harder to fulfill than the one of Theorem~\ref{thm: superset identification}. This is reflected in Table~\ref{tab: identification results}\,(b), showing that the sufficient condition of Theorem~\ref{thm: correct identification} is satisfied in 93.67\%, in contrast to 94.94\% in Table~\ref{tab: identification results}\,(a).
The exact identification is successful at all times, although in 6.33\% this is not guaranteed by Theorem~\ref{thm: correct identification}.

\addtolength{\textheight}{-.2cm}   

The sufficient conditions in both theorems become harder to satisfy the larger $\|\Delta \widehat{a}\|_1 + M\|\Delta \widehat{z}\|_1$ gets, where $\Delta \widehat{a}$, $\Delta \widehat{z}$ denote the \mbox{occurring} attack and the caused coupling deviations, and $M$ is the maximum degree in the subsystem network. Since in the test series \texttt{attack\_3} three inputs per time step are randomly disturbed in contrast to only one in \texttt{attack\_1}, the resulting values $\|\Delta \widehat{a}\|_1$, $\|\Delta \widehat{z}\|_1$ are expected to be larger. 
This becomes evident in the comparison of Tables~\ref{tab: identification results}\,(a) with (c), and (b) with (d), respectively. The sufficient condition of Theorem~\ref{thm: superset identification} (highlighted in gray) as well as Theorem~\ref{thm: correct identification} (blue) is fulfilled in significantly fewer cases. 
In more than 98\% resp.\ 73\% of all cases with unfulfilled sufficient condition, however, a superset resp.\ the attack set $\text{supp}(\widehat{a})$ itself are still correctly identified, such that total scores of 99\% for superset identification and 82\% for exact identification are reached.
Attacking three out of 18 inputs (corresponding to the size of the reduced sensitivity matrices $\widetilde{S}_I^a$), means compromising more than~15\% of the system simultaneously and thus requires attackers with very powerful resources. In this context, the achieved success rates should be regarded as very high.

Setting up the relaxed identification problem~\eqref{opt: identification problem relaxed} requires the parameter $\varepsilon$, which depends on the unknown attack~$\Delta \widehat{a}$, such that computing a solution $\Delta a^\ast_{\eqref{opt: identification problem relaxed}}$ to identify the attack set $\text{supp}(\widehat{a})$ exactly is a rather theoretical consideration or requires a good estimate of $\varepsilon$. For the actual application as attack identification method, solving the identification problem~\eqref{opt: identification problem} is more suitable and guaranteed to find a superset $\text{supp}(a^\ast_{\eqref{opt: identification problem}}) \supseteq \text{supp}(\widehat{a})$ under the condition of Theorem~\ref{thm: superset identification}. The set $\text{supp}(a^\ast_{\eqref{opt: identification problem}}) \setminus \text{supp}(\widehat{a})$, containing the wrongly identified inputs, on average contains 0.56 indices in the test series \texttt{attack\_1} and 0.9 in \texttt{attack\_3}. In a more realistic scenario, we find it valid to assume that the attack set $\text{supp}(\widehat{a})$ remains constant for some time and the attack set must not be identified within only one sampling time. On the contrary, it seems very promising that already within one time step a superset containing all attacked inputs is identified with very high success rate. Even if one or two benign inputs are contained, one can use the findings over several time steps to draw a sophisticated conclusion about the actual attack set.

\section{CONCLUSION}
We considered a hierarchical method for attack identification in distributed nonlinear control systems from preliminary work and carried out a detailed analysis in terms of both theoretical guarantees and numerical results. The method is based on the exchange of locally evaluated sensitivity information and solves a sparse signal recovery problem at each time step. It allows to identify arbitrary attacks on the system's inputs, without requiring global model knowledge nor assuming any attack patterns to be known. We derived sufficient conditions depending on the strength of the attack and properties of the system's dynamics, under which the method is guaranteed to identify all attacked inputs. Numerical experiments for the identification of faulty buses in the IEEE~30 bus power system revealed that not only the sufficient conditions are largely met, but also the success rates of correct identification are very high, although a very demanding attack scenario was considered with randomly generated attacks changing at each time step.

\bibliographystyle{../Template/IEEEtran}
\bibliography{../references_cdc2020}

\end{document}